\author{\IEEEauthorblockN{Lei You$^1$, Lei Lei$^1$, and Di Yuan$^{1,2}$}
\IEEEauthorblockA{\small$^1$Department of Science and Technology,
Link\"{o}ping University, Sweden\\}
\IEEEauthorblockA{\small$^2$Institute for Systems Research, University of Maryland, College Park, MD, 20740, USA\\}
\small\texttt{\{lei.you, lei.lei, di.yuan\}@liu.se, diyuan@umd.edu}
}
\begin{document}

\title{Load Balancing via Joint Transmission in Heterogeneous LTE: Modeling and Computation}
\maketitle

\begin{abstract}
As one of the Coordinated Multipoint (CoMP) techniques, Joint Transmission (JT) can improve the overall system performance. 
In this paper, from the load balancing perspective, we study how the maximum load can be reduced by optimizing JT pattern that characterizes the association between cells and User Equipments (UEs). To give a model of the interference caused by cells with different time-frequency resource usage, we extend a load coupling model, by taking into account JT. In this model, the mutual interference depends on the load of cells coupled in a non-linear system with each other. Under this model, we study a two-cell case and proved that the optimality is achieved in linear time in the number of UEs. After showing the complexity of load balancing in the general network scenario, an iterative algorithm for minimizing the maximum load, named JT-MinMax, is proposed. We evaluate JT-MinMax in a Heterogeneous Network (HetNet), though it is not limited to this type of scenarios. Numerical results demonstrate the significant performance improvement of JT-MinMax on min-max cell load, compared to the conventional non-JT solution where each UE is served by the cell with best received transmit signal.
\end{abstract}

\maketitle

\theoremstyle{plain}
\newtheorem{definition}{Definition}
\newtheorem{theorem}{Theorem}
\newtheorem{lemma}{Lemma}
\newtheorem{proposition}{Proposition}
\newtheorem{postulation}{Postulation}
\newtheorem{property}{Property}
\newtheorem{observation}{Observation}
\newtheorem{corollary}{Corollary}

\section{Introduction}

In Long Term Evolution (LTE), Heterogeneous Networks (HetNets) are viewed as an attractive approach for expanding mobile network capacity as well as alleviating the traffic burden \cite{Andrews:ez}. In a HetNet, both overlaying Macro Cells (MCs) and underlying Small Cells (SCs) are deployed. MCs provide wide area data services and SCs offload part of the traffic volume from MCs. For the forthcoming 5G, the concept of ultra dense HetNet is proposed \cite{Andrews:ez}. As one of the Coordinated Multipoint (CoMP) techniques, Joint Transmission (JT) is viewed as an important technique to incorporate large amount of SCs into 5G networks \cite{Porcello:2014wq, Zhang:2012ej}. In JT, multiple geographically separated cells are allowed to transmit data to a User Equipment (UE) simultaneously \cite{Tolli:2009gg}, so as to enhance the Signal-to-Interference-and-Noise-Ratio (SINR). On the other hand, the resource blocks for the data transmission to this UE are consumed in multiple cells. From this point, the strategies for JT should be chosen carefully with respect to the resource block utilization.

Cell load balancing is a key role in radio resource optimization in HetNet. A model characterizing the cell load is introduced in \cite{Siomina:2009bp}, where the load is defined as the proportion of resource blocks consumption on each cell. Rather than taking the interfering cells as either fully loaded or fixed in a constant load level, this model characterizes the property that interference caused by one cell depends on the average resource utilization of this cell. The cell load indicates the likelihood of one cell that receives the interference from another cell. In \cite{Fehske:2012iw}, it has been shown an accurate interference dependent model for the network performance evaluation. This paper extends the model by taking into consideration JT. Under the extended model, the transmission pattern in JT characterizing the association between cells and UEs affects the network performance in a complicated way. 
For example, letting a cell expand to serve a UE via JT can offload part of the traffic burden from other cells. However, the interference caused by this cell to other cells is increased, because more resource blocks are consumed on this cell. Then the overall performance might go worse. 
 
There are a few recent investigations for cell load balancing problems and JT in HetNets. In \cite{Ye:2013ha}, a user association scheme for load balancing in HetNets is proposed. In \cite{Siomina:2012bl}, by SC range assignment, a load balancing algorithm is proposed in consideration of the load coupling. In \cite{Tanbourgi:2014ck}, JT cooperation is modeled and analyzed for HetNet. However, as far as we know, few work has been done on optimizing JT pattern in such a load coupling model. Further, the load coupling model proposed in \cite{Siomina:2009bp} does not apply directly to the JT scenarios. In this paper, we extends this model, and investigate how to optimize the JT pattern so as to minimize the maximum cell load. The contributions are summarized as follows.
\vskip -20pt
\begin{enumerate}
	\item \emph{A generalized load coupling model}. The proposed model differs from the previous work in the following aspects. First, we extend the previous load coupling model by taking into consideration the JT scenarios. This model characterizes the influence caused by the change of JT pattern on load. 
	\item \emph{Theoretical analysis for a two-cell case}. For this case, two cells are deployed in the network and the two cells serve the same number of UEs. The transmit power and the channel gain of the two cells are symmetric. It is proved that the global optimality is achieved in linear time in the number of UEs.
	\item \emph{Load balancing for the general scenarios}. We showed the computational complexity of load balancing in the general case. A sufficient condition for min-max cell load is given. Based on the condition, an heuristic algorithm JT-MinMax is proposed.
\end{enumerate}
%
 

\section{System Model and Generalized Load Coupling}
\label{sec:sys_mod}

\subsection{Basic Notations}
Denote the sets of all cells by $\mathcal{I}$. Denote by $\mathcal{I}'$ and $\mathcal{I}\backslash\mathcal{I}'$ the sets of all MCs and SCs, respectively. Denote the set of UEs by $\mathcal{J}$. Let $n=|\mathcal{I}|$ and $m=|\mathcal{J}|$. Each UE can be served simultaneously by more than one cells. Let $\mathcal{I}_j$ denote the set of cells serving UE $j$, and $\mathcal{J}_i$ the set of UEs served by cell $i$, respectively. We exclude the case that the UE's demand is zero, so that each UE is served by at least one cell, i.e., $|\mathcal{I}_j|\geq 1,~\forall j\in\mathcal{J}$. The JT pattern is given by an $n\times m$ matrix $\bm{\kappa}$, where $\kappa_{ij}=1$ means that cell $i$ is currently serving UE $j$. We have $|\mathcal{I}_j|=\sum_{i=1}^{n}\kappa_{ij}$ and $|\mathcal{J}_i|=\sum_{j=1}^{m}\kappa_{ij}$, for $\forall i\in\mathcal{I}$ and $j\in\mathcal{J}$, respectively.

\subsection{Load Coupling with JT}
For convenience, we let the JT pattern be fixed in the expression of the SINR and the cell load, in this subsection. 

\subsubsection{SINR in JT}

We model the SINR $\gamma_j$ of UE $j$ in Eq.~(\ref{eq:SINR}). 
\begin{equation}
\gamma_j=\frac{\sum_{i\in \mathcal{I}_j}p_ig_{ij}}{\sum_{k\in \mathcal{I} \backslash \mathcal{I}_j}p_kg_{kj}x_k+\sigma^2}
\label{eq:SINR}
\end{equation}

In Eq.~(\ref{eq:SINR}), the transmit power per resource block (in time and frequency) of cell $i$ is $p_i$ ($p_i>0$), and $g_{ij}$ is the channel gain between cell $i$ and UE $j$. And $\sum_{i\in\mathcal{I}_j}p_ig_{ij}$ is the received signal power from all the serving cells $\mathcal{I}_j$ of the UE $j$. In the denominator, $\sigma^2$ is the noise power. Entity $x_k$ is the load of cell $k$, which is defined to be the proportion of resource blocks consumed on cell $k$ by all the UE $j\in\mathcal{J}_k$. In this context, $x_k$ is intuitively interpreted as the likelihood that the served UEs of cell $i$ receive the interference from $k$ on all resource blocks. Thereby, the term $\sum_{k\in \mathcal{I} \backslash \mathcal{I}_j}p_kg_{kj}x_k$ is the interference that UE $j$ received from other cells.

\subsubsection{Cell's Load for a given Bitrate Demand}

The load of any cell $i$, is represented in Eq.~(\ref{eq:load}), in concern of the bitrate demand of its served UEs.
\begin{equation}
x_i=\sum_{j\in\mathcal{J}_i}y_j,~\textnormal{where }y_j=\frac{d_j}{MB\log_2\left(1+\gamma_j\right)}
\label{eq:load}
\end{equation}

In Eq.~(\ref{eq:load}), $d_j$, is the bitrate demand of UE $j$. In the denominator, $B$ is the bandwidth per resource block and $M$ is the total number of resource blocks available in each cell. The entity $B\log_2(1+\gamma_j)$ is the achievable bitrate per resource block and thus $MB\log_2(1+\gamma_j)$ is the total achievable bitrate for UE $j$. Then $y_j$ is the proportion of the resource block consumption of UE $j$ on all its serving cells. Note that the resource blocks of any UE $j$ consumed on each of its serving cells $i\in\mathcal{I}_j$ are equal, which conforms to that the transmitted data from all cells to a certain UE should be the same in JT. This can be verified in Eq.~(\ref{eq:load}). If UE $j$ is served by cell $i$ and cell $k$ simultaneously, then the consumed resource blocks $y_j$ appears as a term in both $x_i$ and $x_k$. In the remaining part of this paper, for simplicity, we let $MB = 1$ without loss of generality, and $d_{j}$ is normalized by $MB$. We remark that this is an approximate interference coupling model, in which all of the cell load is counted as the likelihood that other cells receive the interference from this cell. Actually, the UEs served by a certain cell at the same time do not share the resource blocks thus generating no mutual interference among each other. For simplicity, we treat all the cell load as the interference part in this model.



\subsection{Load Coupling Model Characterizing JT pattern}

In this subsection, we extended the load coupling model to characterize the JT pattern. We define the \emph{cell load function} and the \emph{SINR function} in  Eq.~(\ref{eq:cell_load_function}) and  Eq.~(\ref{eq:UE_load_function}), respectively.
\begin{equation}
{f}_i^{\bm{\kappa}}(\bm{\gamma}):=\sum_{j=1}^{m}\frac{\kappa_{ij}d_j}{\log_2\left(1+\gamma_j\right)}
\label{eq:cell_load_function}
\end{equation}
\begin{equation}
{h}_j^{\bm{\kappa}}(\bm{x}):=\frac{\sum\limits_{i=1}^{n}p_ig_{ij}\kappa_{ij}}{\sum\limits_{k=1}^{n}p_kg_{kj}x_{k}(1-\kappa_{kj})+\sigma^2}
\label{eq:UE_load_function}
\end{equation}
For $f_i^{\bm{\kappa}}$ and $h_i^{\bm{\kappa}}$, the set $\mathcal{J}_i$ in  Eq.~(\ref{eq:load}) and $\mathcal{I}_j$ in  Eq.~(\ref{eq:SINR}) are indicated by $\bm{\kappa}$.  We let $\bm{f}^{\bm{\kappa}}(\cdot)=[f_1^{\bm{\kappa}}(\cdot),f_2^{\bm{\kappa}}(\cdot),\ldots,f_n^{\bm{\kappa}}(\cdot)]$ and $\bm{h}^{\bm{\kappa}}(\cdot)=[h_1^{\bm{\kappa}}(\cdot),h_2^{\bm{\kappa}}(\cdot),\ldots,h_n^{\bm{\kappa}}(\cdot)]$. Thereby we have the following load coupling equation, shown in  Eq.~(\ref{eq:coupling_1}).
\begin{equation}
\bm{x}=\bm{f}^{\bm{\kappa}}\circ\bm{h}^{\bm{\kappa}}(\bm{x})
\label{eq:coupling_1}
\end{equation}

In Eq.~(\ref{eq:coupling_1}), symbol ``$\circ$'' represents the compound relationship between functions. In other words, $\bm{f}\circ \bm{h}(\bm{x})$ means $\bm{f}[\bm{h}(\bm{x})]$. Some of our results essentially rely on the framework of the standard interference functions (SIF) introduced by Yates in \cite{Yates:1995eh} and has been further studied by Schubert et. al. in \cite{Schubert:2014ud}. After giving definition of SIF, we show some properties and an observation.

\begin{definition}
A function $\bm{f}$: $\mathbb{R}^m_+\rightarrow\mathbb{R}_{++}$ is called an SIF if the following properties hold:
\begin{enumerate}
\item (Scalability) $\alpha \bm{f}(\bm{x})>\bm{f}(\alpha\bm{x}),~\forall \bm{x}\in \mathbb{R}^m_+,~\alpha>1$.
\item (Monotonicity) $\bm{f}(\bm{x})\geq \bm{f}(\bm{x'})$, if $\bm{x} \geq \bm{x'}$.
\end{enumerate}
\end{definition}
\begin{property}
A standard interference function $\bm{f}$ has the following properties \textnormal{\cite{Schubert:2014ud}}:
\begin{enumerate}
\item The function $\bm{f}$ has a fixed point $\bm{x^*}$ if and only if there exists $\bm{x'}\in\mathbb{R}_+^m$ satisfying $\bm{f}(\bm{x'})\leq\bm{x'}$.
\item  For the sequence $\bm{x}^{(0)},\bm{x}^{(1)},\ldots$ generated by fix-point iteration, if there exists $k$ satisfying $\bm{f}(\bm{x}^{(k)}) \leq\bm{f}(\bm{x}^{(k+1)})$, then the sequence $\bm{x}^{(k)},\bm{x}^{(k+1)},\ldots$ is monotonously decreasing (in every component).
\end{enumerate}
\end{property}

\begin{observation}
Given $\bm{\kappa}$, ${f}^{\bm{\kappa}}_i\circ\bm{h}^{\bm{\kappa}}(\bm{\cdot})$ is an SIF of $\bm{x}$ if 1) there exist $i\in[1,n]$, $j\in[1,m]$ such that $\kappa_{ij}=1$ and 2) $\sum_{k=1}^{n}\kappa_{kj}<n$.
\label{ob:varphi_SIF}
\end{observation}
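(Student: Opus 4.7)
The plan is to verify, for any $\bm{\kappa}$ meeting the two listed conditions, the defining properties of an SIF---monotonicity and strict scalability---directly from the definitions of $f_i^{\bm{\kappa}}$ in Eq.~(\ref{eq:cell_load_function}) and $h_j^{\bm{\kappa}}$ in Eq.~(\ref{eq:UE_load_function}). Monotonicity is the easier half: each $h_j^{\bm{\kappa}}$ is a positive constant divided by a denominator that is affine and non-decreasing in $\bm{x}$, hence $h_j^{\bm{\kappa}}$ is component-wise non-increasing in $\bm{x}$; likewise each summand $\kappa_{ij}d_j/\log_2(1+\gamma_j)$ of $f_i^{\bm{\kappa}}$ is non-increasing in $\gamma_j$. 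Composing two component-wise non-increasing maps yields a non-decreasing map, so $\bm{x}\geq\bm{x}'$ immediately gives $f_i^{\bm{\kappa}}(\bm{h}^{\bm{\kappa}}(\bm{x}))\geq f_i^{\bm{\kappa}}(\bm{h}^{\bm{\kappa}}(\bm{x}'))$.

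For scalability the intended chain of inequalities for $\alpha>1$ is
\[f_i^{\bm{\kappa}}\!\bigl(\bm{h}^{\bm{\kappa}}(\alpha\bm{x})\bigr)\leq f_i^{\bm{\kappa}}\!\bigl(\tfrac{1}{\alpha}\bm{h}^{\bm{\kappa}}(\bm{x})\bigr)\leq \alpha\, f_i^{\bm{\kappa}}\!\bigl(\bm{h}^{\bm{\kappa}}(\bm{x})\bigr).\]
The first step rests on the sub-homogeneity of $\bm{h}^{\bm{\kappa}}$: since $\sigma^2>0$, the denominator of $h_j^{\bm{\kappa}}$ evaluated at $\alpha\bm{x}$ is strictly less than $\alpha$ times the denominator at $\bm{x}$, giving $h_j^{\bm{\kappa}}(\alpha\bm{x})>\tfrac{1}{\alpha}h_j^{\bm{\kappa}}(\bm{x})$ componentwise, and the monotonicity already established translates this into the first $\leq$. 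The second step is a termwise application of Bernoulli's inequality $(1+\gamma/\alpha)^{\alpha}\geq 1+\gamma$, which after taking a $\log_2$ gives $\log_2(1+\gamma)\leq \alpha\log_2(1+\gamma/\alpha)$ and hence the per-term bound $d_j/\log_2(1+\gamma_j/\alpha)\leq \alpha d_j/\log_2(1+\gamma_j)$ summed across $j$.

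The main obstacle is promoting this chain to a \emph{strict} inequality, which is what distinguishes scalability from plain sub-homogeneity, and this is where conditions~1 and~2 are needed. Condition~1 supplies a UE $j^{*}$ with $\kappa_{ij^{*}}=1$, so the $j^{*}$-summand of $f_i^{\bm{\kappa}}$ is non-trivial; combined with the standing assumption $|\mathcal{I}_{j^{*}}|\geq 1$, which makes the numerator of $h_{j^{*}}^{\bm{\kappa}}$ strictly positive, we get $\gamma_{j^{*}}>0$, so Bernoulli is strict on that term and delivers strict inequality in the second step. Condition~2 guarantees at least one $k$ with $\kappa_{kj^{*}}=0$, so $h_{j^{*}}^{\bm{\kappa}}$ genuinely depends on $\bm{x}$; this, together with $\sigma^2>0$, reinforces strictness in the first step and prevents the composition from degenerating to a constant. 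Either strict step suffices to conclude $\alpha\, f_i^{\bm{\kappa}}(\bm{h}^{\bm{\kappa}}(\bm{x}))>f_i^{\bm{\kappa}}(\bm{h}^{\bm{\kappa}}(\alpha\bm{x}))$, completing the SIF verification.
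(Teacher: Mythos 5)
Your proof is correct, but it takes a genuinely different route from the paper's. The paper's argument is indirect: it uses conditions 1) and 2) only to establish that $f_i^{\bm{\kappa}}$ really depends on some $h_j^{\bm{\kappa}}$ and that $h_j^{\bm{\kappa}}$ really depends on $\bm{x}$, then asserts (with the proof explicitly deferred to a future publication) that $f_i^{\bm{\kappa}}\circ\bm{h}^{\bm{\kappa}}(\bm{x})$ is concave in $\bm{x}$, and finally invokes the cited result that a (positive) concave function is an SIF. You instead verify the two SIF axioms directly from Eq.~(\ref{eq:cell_load_function}) and Eq.~(\ref{eq:UE_load_function}): monotonicity by composing two componentwise non-increasing maps, and scalability by splitting it into the strict sub-homogeneity of $\bm{h}^{\bm{\kappa}}$ (driven by the $\sigma^2>0$ term in the denominator) followed by the termwise bound $\log_2(1+\gamma)\leq\alpha\log_2(1+\gamma/\alpha)$ from Bernoulli's inequality, with strictness secured on the $j^{*}$ term supplied by condition 1. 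Your route is more elementary and, importantly, self-contained: it actually proves what the paper only asserts, since the concavity claim is never demonstrated there. The concavity route, if completed, would buy slightly more (concavity is a stronger structural property than the SIF axioms and is reusable elsewhere), but for the stated observation your direct verification is the more complete argument. One small remark: as you half-acknowledge, condition 2 is not actually needed for either of the two SIF axioms (the $\sigma^2$ term already makes sub-homogeneity strict, and Bernoulli is strict on the $j^{*}$ term regardless); its real role, consistent with the paper's own discussion, is to keep the composition from being a constant function of $\bm{x}$, so that the fixed-point iteration it is used for is non-trivial.
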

For any cell $i$, if there exists $i\in[1,n]$ and $j\in[1,m]$ such that $\kappa_{ij}=1$, then $f_i(\cdot)$ is a function of $h_j$. For any UE $j$, if $\sum_{k=1}^{n}\kappa_{kj}<n$, then there is at least one cell $k$ such that $\kappa_{kj}=0$. For this cell $k$, there exist a UE $j'\neq j$ such that $\kappa_{ij'}=1$. Therefore, ${h}_j(\cdot)$ is a function of $\bm{x}$. We can verify that the function $f_i^{\bm{\kappa}}\circ\bm{h}^{\bm{\kappa}}(\bm{x})$ is concave for $\bm{x}$. (Due to the space the proof is not given here but will be published elsewhere). By the conclusion in~\cite{Porcello:2014wq} that any concave function is an SIF, we get that $\bm{f}^{\bm{\kappa}}\circ\bm{h}^{\bm{\kappa}}(\bm{x})$ is an SIF of $x$. It is shown in \cite{Yates:1995eh} that an SIF Eq.can be solved by fixed-point iterations. Thus for each given JT pattern, we can compute the corresponding network-wide cell load by doing fixed point iterations in $\bm{f}^{\bm{\kappa}}\circ\bm{h}^{\bm{\kappa}}(\bm{x})$.

\vskip -10pt
\section{Load Balancing for a two-cell Case}
\label{sec:special}

We first investigate a symmetric two-cell case that is tractable, as an introduction to the load balancing problem in the load coupling model. There are cell $1$ and cell $2$ in the network. We denote the set of the two cells' served UE by $1,2,\ldots,m$ and $m+1,m+2,\ldots,2m$ respectively. The symmetry is reflected in the following aspects. First, the transmit power of cell $1$ and $2$ are the same, i.e., $p_1=p_2$. Second, for every UE $j\in[1,m]$, we have the channel gain satisfying $g_{1,j}=g_{2,m+j}$ and $g_{1,m+j}=g_{2,j}$. Besides, we have $d_{j}=d_{m+j}$ as the user demand of any UE $j\in[1,m]$. 

\subsection{Formulation}
The load of any UE $j$ served simultaneously by both cell 1 and 2 is a constant, shown in Eq.~(\ref{eq:constant_load}). This is because $\gamma_j$ is independent of both $x_1$ and $x_2$ in this case.
\begin{equation}
c_j:=\frac{d_j}{\log_2\left(1+\frac{p_1g_{1j}+p_2g_{2j}}{\sigma^2}\right)}
\label{eq:constant_load}
\end{equation}
We use the vector $\bm{\kappa}=[\kappa_1,\kappa_2,\ldots,\kappa_m]$ to denote the JT pattern, i.e., $\kappa_j=1$ means UE $j$ is served by both cell $1$ and $2$. Otherwise, UE $j$ is served only by cell 1 or cell 2, depending on whether $j$ is larger than $m$. Let $y_j,j\in[1,m]$ be the load that UE $j$ consumed on cell $1$. Let $y_j,\j\in[m+1,2m]$ be the load that UE $j$ consumed on cell $2$. The load coupling Eq.is re-written as  Eq.~(\ref{eq:two_cells_coupling1}) and  Eq.~(\ref{eq:two_cells_coupling2}).
\begin{equation}
x_1(\bm{\kappa},x_2):=\!\!\sum_{j=1}^my_j,~y_j=\frac{(1-\kappa_j)d_j}{\log_2\left(1+\frac{p_1g_{1j}}{p_2g_{2j}x_2+\sigma^2}\right)}+c_j\kappa_j
\label{eq:two_cells_coupling1}
\end{equation}
\begin{equation}
\!\!x_2(\bm{\kappa},x_1):=\!\!\!\!\!\!\sum_{j=m+1}^{2m}\!\!\!y_j,~y_j=\frac{(1-\kappa_j)d_j}{\log_2\left(1+\frac{p_2g_{2j}}{p_1g_{1j}x_2+\sigma^2}\right)}+c_j\kappa_j
\label{eq:two_cells_coupling2}
\end{equation}
Note that $x_1$ and $x_2$ are mutually coupled with each other. In addition, both $x_1(\cdot)$ and $x_2(\cdot)$ are SIF in $x_2$ and $x_1$, respectively. That means, if the load of either cell is reduced, then that of the other will be also reduced. The load balancing problem is formalized in Eq.~(\ref{eq:p1}).
\vskip -10pt
\begin{subequations}
\begin{alignat}{2}
 [\textbf{MinMaxL-S}]~\min\limits_{x_1,x_2,\bm{\kappa}} &\quad \eta & \\
 \textnormal{s.t.}  &\quad  x_1=x_1(\bm{\kappa},x_2) &   \\
 &\quad  x_2=x_2(\bm{\kappa},x_1)&   \\
 &\quad x_1,x_2\leq \eta  &\\
  & \quad 0<x_{1},x_{2}\leq 1 &   \\
 & \quad \kappa_{j} \in \{0,1\} &   \quad \!\!\!\!\forall j\in[1,2m] 
\end{alignat}
\label{eq:p1}
\end{subequations}
In MinMaxL-S, $\eta$ is the maximum cell load that we want to minimize. Load coupling constraints are shown in (\ref{eq:p1}b) and (\ref{eq:p1}c).


\subsection{Main Results}


\begin{lemma}
In MinMaxL-S, $y_{j}=y_{m+j}$ at convergence if $\kappa_{j}=\kappa_{m+j}$ for all $j\in[1,m]$.
\label{lma:equal_load}
\end{lemma}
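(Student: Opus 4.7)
The plan is to reduce the claim to showing the symmetric fixed-point property $x_1 = x_2$ at convergence, and then verify term-by-term that $y_j = y_{m+j}$ splits cleanly into the $\kappa_j=1$ and $\kappa_j=0$ cases.

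First, I would exploit the symmetry assumptions to show that the two coupling maps in Eq.~(\ref{eq:two_cells_coupling1}) and Eq.~(\ref{eq:two_cells_coupling2}) are the \emph{same} univariate function. Concretely, define
\[
F(s) \;:=\; \sum_{j=1}^{m}\left[\frac{(1-\kappa_j)d_j}{\log_{2}\bigl(1+\tfrac{p_1 g_{1j}}{p_2 g_{2j}s+\sigma^2}\bigr)}+c_j\kappa_j\right],
\]
so that Eq.~(\ref{eq:two_cells_coupling1}) reads $x_1 = F(x_2)$. For Eq.~(\ref{eq:two_cells_coupling2}), reindex $j \mapsto j-m$ and invoke $p_1=p_2$, $d_{m+j}=d_j$, $g_{2,m+j}=g_{1,j}$, $g_{1,m+j}=g_{2,j}$, and the hypothesis $\kappa_j=\kappa_{m+j}$; the argument of the $\log_2$ becomes $\tfrac{p_1 g_{1j}}{p_2 g_{2j}s+\sigma^2}$, and (using the same symmetry) $c_{m+j}=c_j$. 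Thus $x_2 = F(x_1)$ with the \emph{same} $F$.

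Next, I would show $x_1=x_2$ at convergence. The cleanest route is monotonicity: inspecting $F$, raising $s$ shrinks the SINR term inside the logarithm and hence enlarges each summand, so $F$ is strictly increasing in $s$. If $x_1>x_2$ held at a fixed point, then $F(x_1)>F(x_2)$ would give $x_2>x_1$, a contradiction; the symmetric case is analogous, so $x_1=x_2$. (Alternatively, one can appeal to Property~1 and the SIF uniqueness result from \cite{Yates:1995eh}: both $(x_1,x_2)$ and the diagonal solution $(s,s)$ with $s=F(s)$ must coincide.)

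Finally, with $x_1=x_2$ in hand, I would verify $y_j=y_{m+j}$ entry by entry for each $j\in[1,m]$. When $\kappa_j=\kappa_{m+j}=1$, both loads equal $c_j$, and $c_j=c_{m+j}$ follows from the same symmetry computation used above. When $\kappa_j=\kappa_{m+j}=0$, the expression for $y_{m+j}$ evaluated at $x_1$ becomes, after substituting the symmetry relations, exactly the expression for $y_j$ evaluated at $x_2$; since $x_1=x_2$, equality follows.

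The main obstacle is the middle step: establishing $x_1=x_2$ rigorously at the fixed point. Monotonicity of $F$ makes it short, but care is needed because $F$ mixes an increasing interference term with an additive constant coming from the $\kappa_j=1$ UEs; I would note that the constant part does not affect the strict monotonicity argument, so the contradiction step goes through regardless of how many indices satisfy $\kappa_j=1$.
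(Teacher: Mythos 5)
Your proof is correct and rests on the same two ingredients as the paper's: the symmetry assumptions make the two coupling maps mirror images of each other, and monotonicity of the load in the other cell's load rules out $x_1\neq x_2$ at the fixed point, after which the per-UE equality $y_j=y_{m+j}$ follows term by term. The only difference is organizational --- you first establish $x_1=x_2$ via the common univariate map $F$, whereas the paper argues the contradiction directly at the level of the individual $y_j$'s --- and your version is, if anything, the cleaner write-up.
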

\begin{proof}
If $\kappa_j=\kappa_{m+j}=1$, then $y_{j}=y_{m+j}=c_j$. Now we focus on the case of $\kappa_j=\kappa_{m+j}=0$. Suppose $y_{j}>y_{m+j}$. (The proof for the less-than case is similar.) Then we have $x_2>x_1$, since 
\begin{enumerate}
	\item $y_j$ and $y_{m+j}$ is monotonically increasing function in $x_2$ and $x_1$, respectively
	\item both $y_j$ and $y_{m+j}$ are symmetric in all other parameters with each other
\end{enumerate}
Therefore, for any $k\neq j$ with $\kappa_k=\kappa_{m+k}=1$, we have $y_{k}>y_{m+k}$, which leads to $x_1>x_2$, conflicting the former results.
\end{proof}

\begin{definition}
The \textbf{symmetric JT rule} is that, when cell $2$ expands to serve any UE $j\in[1,m]$, cell $1$ also expands to serve the UE $m+j$.
\label{def:symmetric_CoMP}
\end{definition}

%

\begin{lemma}
For any non-symmetric JT pattern, there exists a corresponding symmetric JT pattern achieves the lower maximum load.\label{lma:optimal_CoMP}	
\end{lemma}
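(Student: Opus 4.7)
Given a non-symmetric pattern $\bm{\kappa}$ with fixed-point loads $(x_1,x_2)$, my goal is to exhibit a symmetric pattern whose max load does not exceed $\eta:=\max(x_1,x_2)$. The construction will exploit the two-cell parameter symmetry and then invoke the SIF machinery from Observation~\ref{ob:varphi_SIF} and Property~1.

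First, I would define the \emph{flipped} pattern $\bm{\kappa}^{\mathrm{flip}}$ by $\kappa^{\mathrm{flip}}_j:=\kappa_{m+j}$ and $\kappa^{\mathrm{flip}}_{m+j}:=\kappa_j$ for $j\in[1,m]$. Because $p_1=p_2$, $g_{1,j}=g_{2,m+j}$, $g_{1,m+j}=g_{2,j}$, and $d_j=d_{m+j}$, the coupling equations~(\ref{eq:two_cells_coupling1})--(\ref{eq:two_cells_coupling2}) for $\bm{\kappa}^{\mathrm{flip}}$ coincide with those for $\bm{\kappa}$ after interchanging the two cell indices, so its fixed point is $(x_2,x_1)$ and its maximum load is again $\eta$. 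From $\bm{\kappa}$ and $\bm{\kappa}^{\mathrm{flip}}$ I would form two symmetric candidates $\bm{\kappa}^{\wedge}$ and $\bm{\kappa}^{\vee}$ via entrywise AND and OR: pairs already symmetric in $\bm{\kappa}$ are preserved, whereas each asymmetric pair collapses to $(0,0)$ under AND and to $(1,1)$ under OR. Both candidates are symmetric by construction, so by Lemma~\ref{lma:equal_load} each converges to a fixed point with equal components, and its max load equals that common value.

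It then suffices to show that for some sym$\in\{\wedge,\vee\}$ the fixed point of $\bm{\kappa}^{\mathrm{sym}}$ is bounded componentwise by $(\eta,\eta)$. By Observation~\ref{ob:varphi_SIF} the map $\bm{f}^{\bm{\kappa}^{\mathrm{sym}}}\circ\bm{h}^{\bm{\kappa}^{\mathrm{sym}}}$ is an SIF, and by Property~1 together with the monotonicity built into the SIF definition, if $\bm{f}^{\bm{\kappa}^{\mathrm{sym}}}\circ\bm{h}^{\bm{\kappa}^{\mathrm{sym}}}(\eta,\eta)\leq(\eta,\eta)$ holds componentwise, then iterating from $(\eta,\eta)$ produces a monotonically decreasing sequence whose fixed-point limit is $\leq(\eta,\eta)$, which is the desired bound.

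The main obstacle is exactly this componentwise inequality. I would split the UE indices into already-symmetric pairs in $\bm{\kappa}$ (whose structural contribution is preserved, with the increase from $y_j(x_i)$ to $y_j(\eta)$ absorbed by slack in the original fixed-point equation) and formerly-asymmetric pairs (whose contribution on each cell swings between $c_j$ under OR and $y_j(\eta)$ under AND). Different asymmetric pairs may individually prefer different choices of sym, so the technical core is aggregating these pairwise tradeoffs into a single global choice that works on \emph{both} components simultaneously. I expect to close this by comparing the aggregated contributions $\sum c_j$ against $\sum y_j(\eta)$ over the asymmetric pairs and invoking the fact that, in the original $\bm{\kappa}$, each asymmetric pair contributes the \emph{larger} of $\{c_j,y_j(\cdot)\}$ to the heavier-loaded cell, so the global choice matching the smaller aggregate retains dominance.
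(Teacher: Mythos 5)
Your reduction of the problem to the single componentwise inequality $\bm{f}^{\bm{\kappa}^{\mathrm{sym}}}\circ\bm{h}^{\bm{\kappa}^{\mathrm{sym}}}(\eta,\eta)\leq(\eta,\eta)$ via the SIF monotone-iteration machinery is sound, and your AND/OR symmetrization is a genuinely different construction from the paper's, which simply duplicates one half of the pattern (taking $\bm{\kappa'}=[\bm{\kappa^2},\bm{\kappa^2}]$) and compares iteration trajectories started from the no-JT fixed point. The problem is that your proof stops exactly where the work has to be done. The step you yourself flag as ``the technical core''---showing that a \emph{single} global choice of $\wedge$ or $\vee$ satisfies the inequality on \emph{both} components simultaneously---is only closed by a heuristic, and that heuristic does not hold up. The assertion that ``each asymmetric pair contributes the larger of $\{c_j,y_j(\cdot)\}$ to the heavier-loaded cell'' is neither justified nor well-posed: for a pair with $\kappa_j=1$, $\kappa_{m+j}=0$, the JT member deposits $c_j$ on both cells while the non-JT member deposits $y_{m+j}(x_1)$ only on cell $2$, so which cell receives the larger share depends on the pair's orientation relative to the heavier cell, and different asymmetric pairs can be oriented oppositely. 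Under $\vee$ this pair becomes $(1,1)$, adding $c_{m+j}$ to both cells while removing $y_{m+j}$ from cell $2$; under $\wedge$ it becomes $(0,0)$, removing $c_j$ from both cells but restoring $y_j(x_2)$ to cell $1$. Whether either move helps is governed by the sign of the per-pair gain $G_j=\overline{y}_j-2c_j$, which can differ from pair to pair, so neither global choice need dominate. Moreover, your proposed aggregate comparison of $\sum c_j$ against $\sum y_j(\eta)$ over the asymmetric pairs controls (at best) the \emph{sum} of the two cell loads, not their maximum, and therefore cannot by itself deliver the componentwise bound you need.

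To repair this you would have to argue at the level of individual pairs, e.g.\ symmetrize each asymmetric pair independently according to the sign of its gain (which produces a symmetric pattern that is neither the pure AND nor the pure OR of $\bm{\kappa}$ and $\bm{\kappa}^{\mathrm{flip}}$), and then verify the fixed-point inequality for that mixed choice. That is essentially the route the paper takes implicitly: Lemma~\ref{lma:equal_load} forces the symmetric fixed point to have equal components, and Theorem~\ref{thm:nec_suf} then expresses that common value as $\overline{x}_1-\sum_j\kappa_jG_j$, so the comparison is carried out against the no-JT baseline through the gains $G_j$ rather than against the asymmetric fixed point $(x_1,x_2)$ directly. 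As written, your argument has a genuine gap at its central step.
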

\begin{proof}
Let $\bm{\kappa}=[\bm{\kappa^1},\bm{\kappa^2}]$ be a non-symmetric JT pattern, where $\bm{\kappa^1}=[\kappa_1,\kappa_2,\ldots,\kappa_m]$ and $\bm{\kappa^2}=[\kappa_{m+1},\kappa_{m+2},\ldots,\kappa_{2m}]$ with $\bm{\kappa^1}\neq \bm{\kappa^2}$. Denote by $\overline{x}_1$ and $\overline{x}_2$ the load at convergence, with $\bm{\overline{\kappa}}=\bm{0}$. Denote by $x_1$ and $x_2$ the load before any iteration after we change $\bm{\overline{\kappa}}$ to $\bm{\kappa}$. That is, $x_1=x_1(\bm{\kappa}^1,\overline{x}_2)$ and $x_2=x_2(\bm{\kappa}^2,\overline{x}_1)$.  Suppose $\overline{x}_1>\overline{x}_2$ (The proof for the less-than case is similar).
Let $\bm{\kappa^{1'}}=\bm{\kappa^2}$, and then $\bm{\kappa'}=[\bm{\kappa^{1'}},\bm{\kappa^2}]$ is a symmetric JT pattern. Let $x'_1=x_1(\bm{\kappa^{1'}},\overline{x}_2)$ and $x'_2=x_2(\bm{\kappa^{2}},\overline{x}_1)$. Due to the symmetry, we have $x'_1=x_1$ and $x'_2=x'_1=x_1$. 

For any $y_j$ in $x_1$ and $y_k$ in $x_2$, we have $y_j>y_k$, because of $x_2>x_1$. Therefore, in the next iteration of the non-symmetric case with $\bm{\kappa}$, $x_1$ will increase, which further causes $x_2$ increase. Let $x_1^{*}$ and $x_2^{*}$ be the load after we change $\bm{\overline{\kappa}}$ to $\bm{\kappa}$ at convergence. We have $x_1^{*}>x'_1$. For the symmetric JT case with $\bm{\kappa'}$, according to Lemma~\ref{lma:equal_load} and the unique fix-point property of SIF, $x'_1=x'_2$ is at convergence, which is less than both $x_1^{*}$ and $x_2^{*}$. Thus the conclusion.
\end{proof}

\begin{definition}
The \textbf{gain of load} for any UE $j\in[1,2n]$ is defined as
$
G_j:=\overline{y}_{j}-2c_j
$
where $\overline{y}_j$ is the load with $\bm{\kappa}=\bm{0}$.
\end{definition}

\begin{theorem}
(\textbf{Greedy Selection)} Suppose $\eta$ and $\eta'$ are the maximum load at convergence for $\bm{\kappa}$ and $\bm{\kappa'}$, respectively. Under the symmetric JT rule, $\eta'<\eta$ if and only if $\sum_{j=1}^{j=2m}\kappa'_jG_j>\sum_{j=1}^{j=2m}\kappa_jG_j$. 
\label{thm:nec_suf}
\end{theorem}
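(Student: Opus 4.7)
The plan is to exploit Lemma~\ref{lma:equal_load} to collapse the coupled two-cell system into a scalar fixed-point equation, and then to translate the ordering on $\sum_{j}\kappa_jG_j$ into the ordering on the converged load by an SIF monotonicity argument anchored at $\bm{\kappa}=\bm{0}$.

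First I would apply Lemma~\ref{lma:equal_load} so that under any symmetric JT pattern the converged loads satisfy $x_1^{\ast}=x_2^{\ast}=:x^{\ast}(\bm{\kappa})$ and $\eta(\bm{\kappa})=x^{\ast}(\bm{\kappa})$. The coupled fixed-point condition (\ref{eq:two_cells_coupling1})--(\ref{eq:two_cells_coupling2}) then reduces to the scalar equation $x^{\ast}(\bm{\kappa})=F(x^{\ast}(\bm{\kappa}),\bm{\kappa})$ with $F(x,\bm{\kappa}):=\sum_{j=1}^{m}\bigl[\kappa_jc_j+(1-\kappa_j)y_j(x)\bigr]$ and $y_j(x):=d_j/\log_2\!\bigl(1+p_1g_{1j}/(p_2g_{2j}x+\sigma^2)\bigr)$. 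Because $F(\cdot,\bm{\kappa})$ is a scalar SIF, Property~1.1 delivers a unique fixed point and Property~1.2 delivers a monotone iteration as soon as one iterate decreases; this monotonicity is the workhorse for the comparison.

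Next I would anchor the analysis at $\overline{x}:=x^{\ast}(\bm{0})$. A direct computation gives $F(\overline{x},\bm{\kappa}')-F(\overline{x},\bm{\kappa})=\sum_{j=1}^{m}(\kappa_j-\kappa'_j)(\overline{y}_j-c_j)$. Using the symmetric-pair identities $\kappa_j=\kappa_{m+j}$, $\overline{y}_j=\overline{y}_{m+j}$, $c_j=c_{m+j}$ and summing the one-cell change over both cells, the two-cell aggregate displacement at $\overline{x}$ is tied to $\sum_{j=1}^{2m}(\kappa_j-\kappa'_j)G_j$: activating the pair $(j,m+j)$ removes $\overline{y}_j$ from each primary cell and places $c_j$ on each of the two cells via JT, giving a per-UE combined swing of $2c_j-\overline{y}_j=-G_j$. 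Assuming $\sum\kappa'_jG_j>\sum\kappa_jG_j$, this yields $F(\overline{x},\bm{\kappa}')<F(\overline{x},\bm{\kappa})\le\overline{x}$. I would then start fixed-point iteration from $\overline{x}$ under both patterns: by monotonicity of each $y_j(\cdot)$ and Property~1.2, the iterates under $\bm{\kappa}'$ are dominated at every step by those under $\bm{\kappa}$, so their common SIF limits inherit $x^{\ast}(\bm{\kappa}')<x^{\ast}(\bm{\kappa})$, i.e.\ $\eta'<\eta$. The ``only if'' direction is the contrapositive with the same monotone coupling.

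The main obstacle is carrying the anchor-point inequality through the entire iteration, because the two sequences visit different values of $x$ and the interference terms $y_j(x)$ genuinely depend on $x$. I would handle this by a monotone-coupling induction: the functional inequality $F(\cdot,\bm{\kappa}')\le F(\cdot,\bm{\kappa})$ together with component-wise monotonicity of $F$ in $x$ propagates the inequality from step $k$ to step $k+1$, and strict scalability of the SIF prevents the strict anchor inequality from collapsing at the common limit. A secondary, more technical difficulty is the coefficient mismatch between the per-cell quantity $\overline{y}_j-c_j$ appearing in $F(\overline{x},\bm{\kappa}')-F(\overline{x},\bm{\kappa})$ and the global quantity $G_j=\overline{y}_j-2c_j$ in the criterion; this is exactly what is reconciled by summing the one-cell swing over the symmetric pair $(j,m+j)$ so that each activated UE accounts for one $c_j$ on each of the two cells, producing the $2c_j$ in $G_j$.
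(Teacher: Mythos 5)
Your accounting of the load swing is the same as the paper's: you anchor both patterns at the all-zero fixed point $\overline{x}$, observe that activating the symmetric pair $(j,m+j)$ removes $\overline{y}_j$ from cell $1$ and places $c_j$ on it twice (once for each UE of the pair) for a net change of $-G_j$, and then compare the two resulting trajectories. Up to that point you and the paper agree, and you are in fact more explicit than the paper about why the coefficient in $G_j$ is $2c_j$ rather than $c_j$.

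The gap is in the step you yourself flag as the main obstacle. The hypothesis $\sum_j\kappa'_jG_j>\sum_j\kappa_jG_j$ is a statement about the $y_j$ evaluated at the single point $x=\overline{x}$ (that is what $\overline{y}_j$ means), so it only yields the anchor inequality $F(\overline{x},\bm{\kappa}')<F(\overline{x},\bm{\kappa})$. Your coupling induction, however, needs the pointwise domination $F(x,\bm{\kappa}')\le F(x,\bm{\kappa})$ along the whole decreasing trajectory, and that does not follow: writing $F(x,\bm{\kappa})-F(x,\bm{\kappa}')=\sum_j(\kappa'_j-\kappa_j)\bigl(y_j(x)-2c_j\bigr)$, the terms $y_j(x)$ shrink at $j$-dependent rates as $x$ decreases (through the ratio $g_{2j}/g_{1j}$ in the interference term), so when $\bm{\kappa}$ and $\bm{\kappa}'$ activate different UEs and the two sums of $G_j$ are close, the sign of this difference can flip away from $x=\overline{x}$. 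Monotonicity of $F$ in its first argument cannot rescue this, because it compares $F(x'^{(k)},\bm{\kappa}')$ with $F(x^{(k)},\bm{\kappa}')$, not with $F(x^{(k)},\bm{\kappa})$; the same unresolved issue infects the contrapositive you invoke for the ``only if'' direction. For what it is worth, the paper's own proof does not close this gap either: it simply declares the converged loads to be $\overline{x}_1-\sum_j\kappa_jG_j$, i.e., it takes the first iterate from the anchor as the fixed point and ignores the re-coupling of the non-JT users' loads, which is precisely the effect you are trying (and failing) to control. A rigorous version would need either to evaluate the gains self-consistently at the new fixed point, or an additional hypothesis guaranteeing that $y_j(x)-2c_j$ keeps its sign along the trajectory (e.g., restricting to nested patterns $\bm{\kappa}'\ge\bm{\kappa}$ with the relevant gains remaining positive at the smaller loads, which is all the greedy rule actually requires).
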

\begin{proof}
By Lemma \ref{lma:equal_load} and Lemma \ref{lma:optimal_CoMP}, we have 
$
\sum_{j=1}^{j=m}\kappa_jG_j=\sum_{j=m+1}^{j=2m}\kappa_jG_j
$
and 
$
\sum_{j=1}^{j=m}\kappa'_jG_j=\sum_{j=m+1}^{j=2m}\kappa'_jG_j
$ 
Then we focus on $\sum_{j=1}^{j=m}\kappa_jG_j$ and $\sum_{j=1}^{j=m}\kappa'_jG_j$. For the necessity, we prove its converse-negative proposition. If $\sum_{j=1}^{j=2m}\kappa'_jG_j<\sum_{j=1}^{j=2m}\kappa_jG_j$, then the new convergence points for $\bm{\kappa}$ and $\bm{\kappa'}$ are 
$
x_1=\overline{x}_1-\sum_{j=1}^{j=2m}\kappa_jG_j
$ and 
$
x'_1=\overline{x}-\sum_{j=1}^{j=2m}\kappa'_jG_j
$
respectively. And we have $x'_1>x_1$ thus $\eta'>\eta$. The proof of the sufficiency is the same as that for the necessity.
\end{proof}



By Theorem \ref{thm:nec_suf}, it is shown that a greedy algorithm achieves the global optimality for MinMaxL-S problem, by utilizing the greedy selection rule on $G_j$. That is, to let $\kappa_j=1$ if $G_j>0$, for all $j\in[1,2m]$.

\section{A General Load Balancing Algorithm}
\label{sec:general}


\subsection{Formulation}

The formulation for load balancing in the general case is shown in Eq.~(\ref{eq:p0}). The objective is to minimize the maximum cell load $\eta$. The optimization variable $\bm{\kappa}$ gives the JT pattern for each UE. The load coupling Eq.is in constraint (\ref{eq:p0}b). Constraint (\ref{eq:p0}c) gives the definition of the maximum load $\eta$. In constraint (\ref{eq:p0}d), the load of each cell is limited to $1$ at most. In constraint (\ref{eq:p0}e), each UE is limited to be served by at most $K(K<n)$ cells at a time. Constraint (\ref{eq:p0}f) appoints the domain of definition for $\bm{\kappa}$. 
\vskip -10pt
\begin{subequations}
\begin{alignat}{2}
 [\textbf{MinMaxL-G}]~~~ \min\limits_{\bm{x,\kappa}} &\quad \eta \\
 \textnormal{s.t.}  &\quad  \bm{x}=\bm{f}^{\bm{\kappa}}\circ\bm{g}^{\bm{\kappa}}(\bm{x})\\
 &\quad x_{i}\leq \eta ~~\quad\quad\quad\forall i\in\mathcal{I} \\
 &\quad 0< x_{i}\leq 1 \quad\quad\forall i\in\mathcal{I} \\
 &\quad \sum_{i=1}^{n}\kappa_{ij}\leq K \quad~\forall j\in\mathcal{J} \\
 &\quad \kappa_{ij}\in\{0,1\} \quad\forall i\in\mathcal{I}, j\in\mathcal{J}
\end{alignat}
\label{eq:p0}
\end{subequations}

\vskip -12pt

\subsection{Main Results}

%
The first main result is the computational complexity of MinMaxL-G shown in Theorem \ref{thm:hardness}.

\begin{theorem}
MinMaxL-G is $\mathcal{NP}$-hard.
\label{thm:hardness}
\end{theorem}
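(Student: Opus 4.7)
The plan is to establish NP-hardness by a polynomial-time reduction from the classical NP-complete \emph{Partition} problem. Given positive integers $a_1,\ldots,a_m$ with $\sum_{j=1}^{m} a_j = 2T$, Partition asks whether there exists $S\subseteq\{1,\ldots,m\}$ with $\sum_{j\in S} a_j = T$; its min-max flavor matches the objective of MinMaxL-G cleanly, and if a decision version of MinMaxL-G (``is there $\bm{\kappa}$ achieving max load $\le \eta^\star$?'') is hard, then the optimization problem is hard as well.

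From a Partition instance I would construct a MinMaxL-G instance with $n=2$ cells, $m$ UEs, $K=1$ (so each UE is served by exactly one cell), and target threshold $\eta^\star = T$. Cell powers, noise, and channel gains are chosen symmetrically so that each UE has identical received-signal quality from either cell, while the cross-gain terms $p_k g_{kj}$ for the non-serving cell are rendered negligible relative to $\sigma^2$ (for instance by taking $\sigma^2$ large, or by letting the non-serving-cell gains tend to $0$). I would then pick each demand $d_j$ so that the per-UE load contribution $y_j = d_j/\log_2\bigl(1+p_i g_{ij}/\sigma^2\bigr)$ equals exactly $a_j$, independent of which cell serves $j$. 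At the unique fixed point guaranteed by the SIF property established in Observation~\ref{ob:varphi_SIF}, the cell loads become $x_i = \sum_{j\in\mathcal{J}_i} a_j$, so MinMaxL-G admits a solution with $\eta \le T$ if and only if the Partition instance is a yes-instance. All parameters can be encoded in a polynomial number of bits in the input size, so the reduction runs in polynomial time.

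The main obstacle is the nonlinear fixed-point coupling encoded in constraint~(\ref{eq:p0}b): in the general model, $x_1$ and $x_2$ are entangled through the interference terms of $\bm{h}^{\bm{\kappa}}$, so one cannot blindly identify cell load with a simple additive sum of per-UE contributions as in classical multiprocessor scheduling. The cleanest way around this is to drive the cross channel gains to $0$, which decouples the two cells entirely and collapses the compound SIF to the additive scheduling load. If one prefers to keep all gains strictly positive to respect the physical interpretation of $g_{ij}$, I would instead show that sufficiently small cross gains perturb each $x_i$ by less than half the smallest positive gap between attainable loads, which suffices to preserve the yes/no distinction on the integer-valued Partition instance. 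Either variant yields a polynomial reduction, and hence NP-hardness of MinMaxL-G.
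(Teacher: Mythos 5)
The paper does not actually supply a proof of Theorem~\ref{thm:hardness}: it only asserts that the result follows from a reduction from 3-SAT and defers the details to a journal version. Your route via Partition is therefore genuinely different from the paper's (claimed) one, and it is worth noting what each buys. A Partition reduction, if carried through, establishes only weak NP-hardness (the hard instances require demands encoded in binary), whereas a 3-SAT reduction of the kind the authors allude to would typically yield strong NP-hardness and, more importantly, hardness that stems from the combinatorics of the JT pattern and the interference coupling rather than from number-packing. Your reduction deliberately kills the coupling (cross gains to zero, $K=1$, two symmetric cells), which collapses MinMaxL-G to two-machine makespan minimization; that is a legitimate and much more elementary way to certify the theorem as literally stated, and the decoupling/perturbation discussion correctly identifies the one place where the fixed-point structure could interfere with the argument.

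There is, however, one concrete flaw in the construction as written: constraint~(\ref{eq:p0}d) forces $0 < x_i \leq 1$, so you cannot set the per-UE loads $y_j$ equal to the integers $a_j$ and the threshold to $\eta^\star = T$, since any cell serving at least one UE would then carry load at least $1$ and typically far more, making every nontrivial instance infeasible regardless of the Partition answer. The fix is routine --- normalize so that $y_j = a_j/(2T)$ and $\eta^\star = 1/2$, which keeps all loads in $(0,1]$ and all parameters polynomially encodable --- but it interacts with your perturbation fallback: the gap between attainable loads shrinks to $1/(2T)$, so the cross gains must be chosen smaller than a quantity of order $1/T$, which still needs only polynomially many bits but should be said explicitly. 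You should also state that you are invoking the system-model assumption $|\mathcal{I}_j| \geq 1$, since it does not appear as an explicit constraint in~(\ref{eq:p0}); without it, leaving every UE unserved trivially minimizes the maximum load and the reduction collapses.
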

\begin{proof}
The complexity of MinMaxL-G is proved by a reduction from the 3-SAT problem. Due to the limit of the paper length, the proof detail is not shown here, but will be published in a journal version.
\end{proof}



\begin{lemma}
Suppose $\kappa_{ij}=0\longrightarrow\kappa'_{ij}=1$. Then $\forall\bm{x},~\bm{{f}}\circ\bm{{h}^{\bm{\kappa'}}}(\bm{x})\leq\min\left\{\bm{{f}}\circ\bm{{h}}(\bm{x}),\bm{{f}}^{\bm{\kappa'}}\circ\bm{{h}}^{\bm{\kappa'}}(\bm{x})\right\}$.	
\label{lma:x_bounds}
\end{lemma}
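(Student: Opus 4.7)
The plan is to prove the two componentwise inequalities $\bm{f}^{\bm{\kappa}}\circ\bm{h}^{\bm{\kappa'}}(\bm{x})\le\bm{f}^{\bm{\kappa}}\circ\bm{h}^{\bm{\kappa}}(\bm{x})$ and $\bm{f}^{\bm{\kappa}}\circ\bm{h}^{\bm{\kappa'}}(\bm{x})\le\bm{f}^{\bm{\kappa'}}\circ\bm{h}^{\bm{\kappa'}}(\bm{x})$ separately and then take the componentwise minimum. I read the hypothesis $\kappa_{ij}=0\to\kappa'_{ij}=1$ as saying that the move from $\bm{\kappa}$ to $\bm{\kappa'}$ only turns 0-entries into 1-entries, i.e.\ $\bm{\kappa'}\ge\bm{\kappa}$ entrywise, and that the unsuperscripted $\bm{f}$ and $\bm{h}$ in the statement stand for $\bm{f}^{\bm{\kappa}}$ and $\bm{h}^{\bm{\kappa}}$.

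Three elementary monotonicity observations drive the argument, each obtained by inspecting (\ref{eq:cell_load_function}) and (\ref{eq:UE_load_function}). First, for every fixed $\bm{\kappa}$, $f_i^{\bm{\kappa}}(\bm{\gamma})$ is nonincreasing in each $\gamma_{j'}$, since $1/\log_2(1+\gamma_{j'})$ is decreasing and the coefficients $\kappa_{ij'}d_{j'}$ are nonnegative. Second, $f_i^{\bm{\kappa'}}(\bm{\gamma})\ge f_i^{\bm{\kappa}}(\bm{\gamma})$ pointwise, because the extra contributions $(\kappa'_{ij'}-\kappa_{ij'})d_{j'}/\log_2(1+\gamma_{j'})$ are nonnegative. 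Third, $h_j^{\bm{\kappa'}}(\bm{x})\ge h_j^{\bm{\kappa}}(\bm{x})$ pointwise, because each flipped coordinate $(k,j)$ simultaneously adds $p_k g_{kj}$ to the numerator of $h_j$ and removes the nonnegative term $p_k g_{kj}x_k$ from its denominator, and the denominator remains bounded below by $\sigma^2>0$.

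With these in hand, the two bounds follow immediately. The first is obtained by fixing the outer function to $\bm{f}^{\bm{\kappa}}$ and applying the first observation coordinatewise to the pointwise inequality $\bm{h}^{\bm{\kappa'}}(\bm{x})\ge\bm{h}^{\bm{\kappa}}(\bm{x})$ from the third. The second is obtained by evaluating the second observation at the common inner argument $\bm{h}^{\bm{\kappa'}}(\bm{x})$. Taking the componentwise minimum of the two bounds yields the claim.

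There is no truly hard step: once $\bm{\kappa'}\ge\bm{\kappa}$ is isolated as the operative hypothesis, everything reduces to sign-checking coefficients. The only subtlety worth emphasizing is conceptual — the outer load function and the inner SINR function respond in \emph{opposite} senses to adding serving links (the SINR improves but the outer sum gains a term), so neither pure composition need dominate the other, and the mixed composition $\bm{f}^{\bm{\kappa}}\circ\bm{h}^{\bm{\kappa'}}$ serves as a common lower envelope. This is exactly the structural property JT-MinMax will exploit to certify monotone progress after each 0-to-1 flip.
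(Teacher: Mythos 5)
Your proof is correct and follows essentially the same two-part decomposition as the paper's: $\bm{h}^{\bm{\kappa'}}(\bm{x})\geq\bm{h}^{\bm{\kappa}}(\bm{x})$ combined with the antitonicity of $\bm{f}$ in $\bm{\gamma}$ gives the first bound, and $\bm{f}^{\bm{\kappa'}}\geq\bm{f}^{\bm{\kappa}}$ evaluated at the common argument $\bm{h}^{\bm{\kappa'}}(\bm{x})$ gives the second. You simply spell out the sign-checking that the paper's one-line proof leaves implicit (and which its second implication in fact leaves syntactically incomplete).
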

\begin{proof}
On one hand, $\bm{{h}}^{\bm{\kappa'}}(\bm{x})\geq\bm{{h}}(\bm{x})\Longrightarrow\bm{{f}}\circ\bm{{h}}^{\bm{\kappa'}}(\bm{x})\leq\bm{{f}}\circ\bm{{h}}(\bm{x})$. On the other hand, $\bm{{f}}^{\bm{\kappa'}}(\bm{\gamma})\geq\bm{{f}}(\bm{\gamma})\Longrightarrow\bm{{f}}^{\bm{\kappa'}}\circ\bm{{h}}^{\bm{\kappa'}}(\bm{x})$. 
\end{proof}

We show a sufficient condition in Theorem \ref{thm:adding_sufficient}, for improving the maximum cell load $\eta$ in MinMaxL-G, corresponding to adding a JT downlink between cell $i$ and UE $j$. Suppose we change the element $\kappa_{ij}$ in $\bm{\kappa}$ from 0 to 1 and denote the obtained pattern by $\bm{\kappa'}$. This operation is denoted by $\kappa_{ij}=0\longrightarrow\kappa'_{ij}=1$.

\begin{theorem}
Suppose $\kappa_{ij}=0\longrightarrow\kappa'_{ij}=1$, $\bm{\widetilde{x}}=\bm{{f}}\circ\bm{{h}}(\bm{\widetilde{x}})$ and $\bm{x}=\bm{{f}^{\bm{\kappa'}}}\circ\bm{{h}^{\bm{\kappa'}}}(\bm{x})$. Then $\bm{x}\leq\bm{\widetilde{x}}$ if $\exists k\geq 1$ in the iteration $\bm{x}^{(k)}=\bm{{f}}\circ\bm{{h}}^{\bm{\kappa'}}(\bm{x}^{(k-1)})$ such that ${f}^{\bm{\kappa'}}_c\circ\bm{{h}}^{\bm{\kappa'}}(\bm{x}^{(k)})\leq x^{(k)}_c$, where $\bm{x}^{(0)}=\bm{\widetilde{x}}$.
\label{thm:adding_sufficient}
\end{theorem}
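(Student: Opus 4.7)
The plan is to establish the sandwich $\bm{x}\leq\bm{x}^{(k)}\leq\bm{\widetilde{x}}$ in two halves. The upper half relies on the ``mixed'' iteration $\bm{x}^{(\ell)}=\bm{f}\circ\bm{h}^{\bm{\kappa'}}(\bm{x}^{(\ell-1)})$ being monotonically decreasing when started from $\bm{\widetilde{x}}$, so that $\bm{x}^{(k)}\leq\bm{\widetilde{x}}$. The lower half uses the hypothesis to run a second, ``pure'' iteration under $\bm{f}^{\bm{\kappa'}}\circ\bm{h}^{\bm{\kappa'}}$ from $\bm{x}^{(k)}$ downward, with SIF properties guaranteeing that its limit is precisely $\bm{x}$.

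For the upper half, Lemma~\ref{lma:x_bounds} and the fixed-point identity $\bm{\widetilde{x}}=\bm{f}\circ\bm{h}(\bm{\widetilde{x}})$ immediately give $\bm{x}^{(1)}=\bm{f}\circ\bm{h}^{\bm{\kappa'}}(\bm{\widetilde{x}})\leq\bm{f}\circ\bm{h}(\bm{\widetilde{x}})=\bm{\widetilde{x}}$. The composed operator is monotone in $\bm{x}$ (because $\bm{h}^{\bm{\kappa'}}$ is decreasing in $\bm{x}$ and $\bm{f}$ is decreasing in $\bm{\gamma}$), so the base inequality $\bm{x}^{(1)}\leq\bm{x}^{(0)}$ propagates inductively to $\bm{x}^{(\ell+1)}\leq\bm{x}^{(\ell)}$ for every $\ell$, and in particular $\bm{x}^{(k)}\leq\bm{\widetilde{x}}$.

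For the lower half, I read the hypothesis component-wise as $\bm{f}^{\bm{\kappa'}}\circ\bm{h}^{\bm{\kappa'}}(\bm{x}^{(k)})\leq\bm{x}^{(k)}$. Property~1 then certifies existence of a fixed point of $\bm{f}^{\bm{\kappa'}}\circ\bm{h}^{\bm{\kappa'}}$, which by the uniqueness of SIF fixed points must be $\bm{x}$. Taking $\bm{y}^{(0)}=\bm{x}^{(k)}$ as the seed for the pure iteration $\bm{y}^{(\ell)}=\bm{f}^{\bm{\kappa'}}\circ\bm{h}^{\bm{\kappa'}}(\bm{y}^{(\ell-1)})$, the hypothesis gives $\bm{y}^{(1)}\leq\bm{y}^{(0)}$, so Property~2 forces the sequence to descend monotonically to $\bm{x}$, yielding $\bm{x}\leq\bm{x}^{(k)}$. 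Combining the two halves completes the argument.

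The main obstacle is bridging the two distinct operators: the mixed $\bm{f}\circ\bm{h}^{\bm{\kappa'}}$ that defines the iterates in the hypothesis and the pure $\bm{f}^{\bm{\kappa'}}\circ\bm{h}^{\bm{\kappa'}}$ whose fixed point is $\bm{x}$. Lemma~\ref{lma:x_bounds} supplies exactly the ordering $\bm{f}\circ\bm{h}^{\bm{\kappa'}}(\cdot)\leq\bm{f}\circ\bm{h}(\cdot)$ that initializes the mixed iteration below $\bm{\widetilde{x}}$, while the hypothesis plays the role of super-solution for Property~1 under the pure operator. A minor subtlety is the subscript $c$ in the statement, which I read as quantifying over every cell; a narrower reading naming a single cell would require an additional step to lift the scalar super-solution to a vector inequality before Property~1 can be invoked.
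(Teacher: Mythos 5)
Your overall architecture matches the paper's: a descending mixed iteration $\bm{x}^{(t)}=\bm{f}\circ\bm{h}^{\bm{\kappa'}}(\bm{x}^{(t-1)})$ seeded at $\bm{\widetilde{x}}$ (initialized via Lemma~\ref{lma:x_bounds} and propagated by monotonicity), followed by a switch to the pure operator $\bm{f}^{\bm{\kappa'}}\circ\bm{h}^{\bm{\kappa'}}$ once a super-solution is in hand, with the SIF properties carrying the sequence down to the fixed point $\bm{x}$. The upper half of your argument is essentially identical to the paper's.

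However, there is a genuine gap in the lower half, and it sits exactly at the point you flagged and then waved away. The subscript $c$ in the hypothesis is \emph{not} a universal quantifier over cells: it names the single cell that acquires the new JT link, i.e., the $i$ in $\kappa_{ij}=0\longrightarrow\kappa'_{ij}=1$. This is clear from how the condition is used in Algorithm~\ref{alg:general}, where line~8 tests only one component; reading it as a vector inequality would make the theorem a strictly weaker (and computationally more expensive to check) statement than the one actually claimed and used. The lifting step you describe as ``an additional step'' is therefore mandatory, and it is the only non-routine part of the proof. The paper supplies it as follows: since the pattern change touches only row $c$ of $\bm{\kappa}$, one has $f^{\bm{\kappa'}}_i=f_i$ for every $i\neq c$, and hence
\begin{equation*}
f^{\bm{\kappa'}}_i\circ\bm{h}^{\bm{\kappa'}}(\bm{x}^{(k)})
= f_i\circ\bm{h}^{\bm{\kappa'}}(\bm{x}^{(k)})
\leq f_i\circ\bm{h}^{\bm{\kappa'}}(\bm{x}^{(k-1)})
= x^{(k)}_i ,
\end{equation*}
where the inequality uses monotonicity together with $\bm{x}^{(k)}\leq\bm{x}^{(k-1)}$ from the descending chain (this is also why the statement requires $k\geq 1$). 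Combined with the scalar hypothesis for component $c$, this yields the full vector inequality $\bm{f}^{\bm{\kappa'}}\circ\bm{h}^{\bm{\kappa'}}(\bm{x}^{(k)})\leq\bm{x}^{(k)}$, after which your Property~1/Property~2 argument goes through. Without this step your proof establishes a different theorem; with it, it coincides with the paper's.
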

\begin{proof} The basic idea is to construct an iteration process that making the cell load vector in iterations monotonically decreases. In iterations $t\in[1,k]$, let $\bm{x}^{(t)}=\bm{{f}}\circ\bm{{h}}^{\bm{\kappa'}}(\bm{x}^{(t-1)})$. By Lemma~\ref{lma:x_bounds}, we have
$
\bm{x}^{(1)}=\bm{{f}}\circ\bm{{h}}^{\bm{\kappa'}}(\bm{x}^{(0)})\leq\bm{{f}}\circ\bm{{h}}(\bm{x}^{(0)})=\bm{x}^{(0)}
$
By Property 1, part 2), we have
\begin{equation}
\bm{x}^{(k)}\leq\bm{x}^{(k-1)}\leq\cdots\leq\bm{x}^{(0)}
\label{eq:k_leq_k-1}
\end{equation}
In iterations $t>k$, let $\bm{x}^{(k+1)}=\bm{{f}}^{\bm{\kappa'}}\circ\bm{{h}}^{\bm{\kappa'}}(\bm{x}^{(k)})$. According to the condition in Theorem~\ref{thm:adding_sufficient}, ${f}^{\bm{\kappa'}}_c\circ\bm{{h}}^{\bm{\kappa'}}(\bm{x}^{(k)})\leq x^{(k)}_c$ holds. For any $i\neq c$, by  Eq.~(\ref{eq:cell_load_function}) and  Eq.~\ref{eq:UE_load_function}, we have 
${f}^{\bm{\kappa'}}_i\circ\bm{{h}}^{\bm{\kappa'}}(\bm{x}^{(k)})={f}_i\circ\bm{{h}}^{\bm{\kappa'}}(\bm{x}^{(k)})$. 
According to the monotonicity and  Eq.~(\ref{eq:k_leq_k-1}),
$
{f}_i\circ\bm{{h}}^{\bm{\kappa'}}(\bm{x}^{(k)})\leq{f}_i\circ\bm{{h}}^{\bm{\kappa'}}(\bm{x}^{(k-1)})=\bm{x}^{(k)}
$
holds. Then we have 
$
\bm{x}^{(k+1)}=\bm{{f}}^{\bm{\kappa'}}\circ\bm{{h}}^{\bm{\kappa'}}(\bm{x}^{(k)})\leq\bm{x}^{(k)}
$
By Property 1, part 1) and 2), at convergence we have
$
\bm{x}=\bm{{f}}^{\bm{\kappa'}}\circ\bm{{h}}^{\bm{\kappa'}}(\bm{x})\leq\cdots\leq\bm{x}^{(k+1)}\leq\bm{x}^{(k)}
\label{eq:x_converge}	
$.
Combined with  Eq.~(\ref{eq:k_leq_k-1}), we have $\bm{x}\leq\bm{x}^{(0)}=\bm{\widetilde{x}}$. 
Hence the conclusion.
\end{proof}

\subsection{Algorithm Design}

As shown in Algorithm \ref{alg:general}, the basic idea of JT-MinMax is to check if a better maximum cell load can be obtained by the sufficient condition given in Theorem~\ref{thm:adding_sufficient}. Parameter $\gamma$ is an integer, used as the counting variable. JT-MinMax goes over the network for $\gamma$ iterations. 
Parameter $\tau$ is the pre-assigned maximum number of iterations on checking the sufficient condition. Both $\gamma$ and $\tau$ affect the performance of JT-MinMax. Basically, the larger $\gamma$ and $\tau$ guarantee the better solution, while on the other hand increasing the computation effort.

\begin{algorithm}[h]
\renewcommand{\algorithmicrequire}{\textbf{Given: }}
\renewcommand{\algorithmicensure}{\textbf{Output: }}
\begin{algorithmic} 
\caption{JT-MinMax}
\label{alg:general}
\REQUIRE $\bm{p}$, $\bm{d}$, $\bm{w}$, $\bm{x}$, $\bm{y}$, $\tau$, $\gamma$\\
\ENSURE $\bm{\kappa^{*}}$, $\bm{x^{*}}$\\
\begin{codebox}
\li \While $\gamma>0$  \Do
\li 	\For $i \leftarrow 1 \To n$ and $j \leftarrow 1 \To n$ \Do
\li 			\If $\kappa_{ij}= 0$ \Then
\li                 $\bm{\kappa'}\leftarrow\bm{\kappa}:\kappa_{ij}\leftarrow1$
\li					\For $k \leftarrow 1 \To \tau$ \Do
\li						\If $\sum\limits_{h=1}^n\kappa_{hj}\leq K$ \Then
\li							$\bm{x}^{(k)}\leftarrow\bm{{f}}\circ\bm{{h}}^{\bm{\kappa}}(\bm{x}^{(k-1)})$
\li							\If ${f}^{\bm{\kappa'}}_c\circ\bm{{h}}^{\bm{\kappa'}}(\bm{x}^{(k)})\leq x^{(k)}_c$  \Then
\li								$\bm{\kappa}\leftarrow\bm{\kappa'}$		
\li								\textbf{break}
						\End
					\End
				\End
			\End
		\End
\li  	$\gamma\leftarrow \gamma-1$
    \End
\li	$\bm{\kappa}^{*}\leftarrow\bm{\kappa}$
\li $\bm{x^{*}}=\bm{f}^{\bm{\kappa^*}}\circ\bm{g}^{\bm{\kappa^*}}(\bm{x^{*}})$
\li \Return $\bm{\kappa'}$, $\bm{x^{*}}$    
\end{codebox}
\end{algorithmic} 
\end{algorithm}

\section{Simulation}
\label{sec:simulation}

\begin{figure*}
\vskip -16pt
\centering
\subfigure[A HetNet layout example.]
{\includegraphics[width=0.25\linewidth]{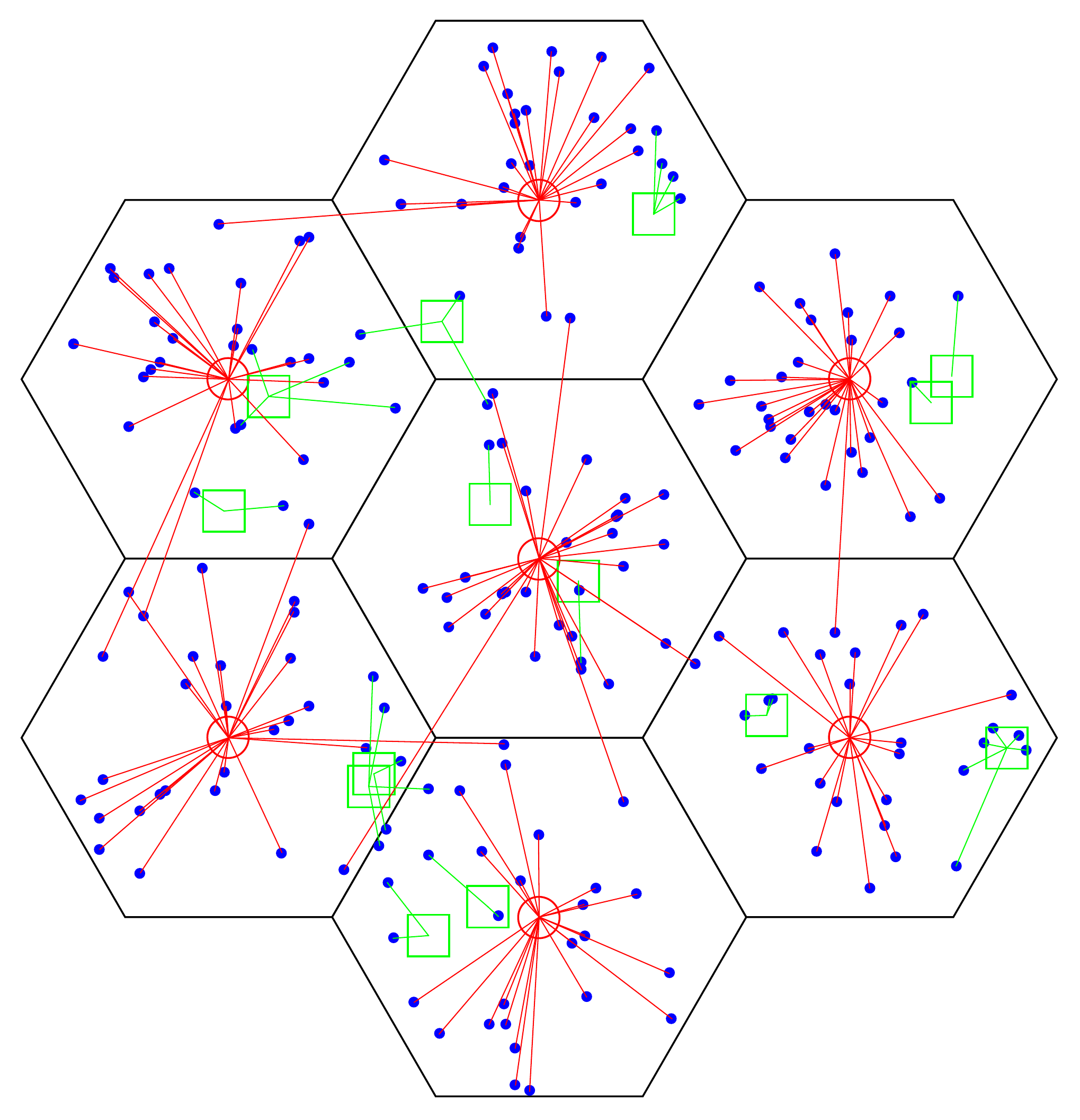}}~~~
\subfigure[Load with respect to user demand]
{\includegraphics[width=0.3\linewidth]{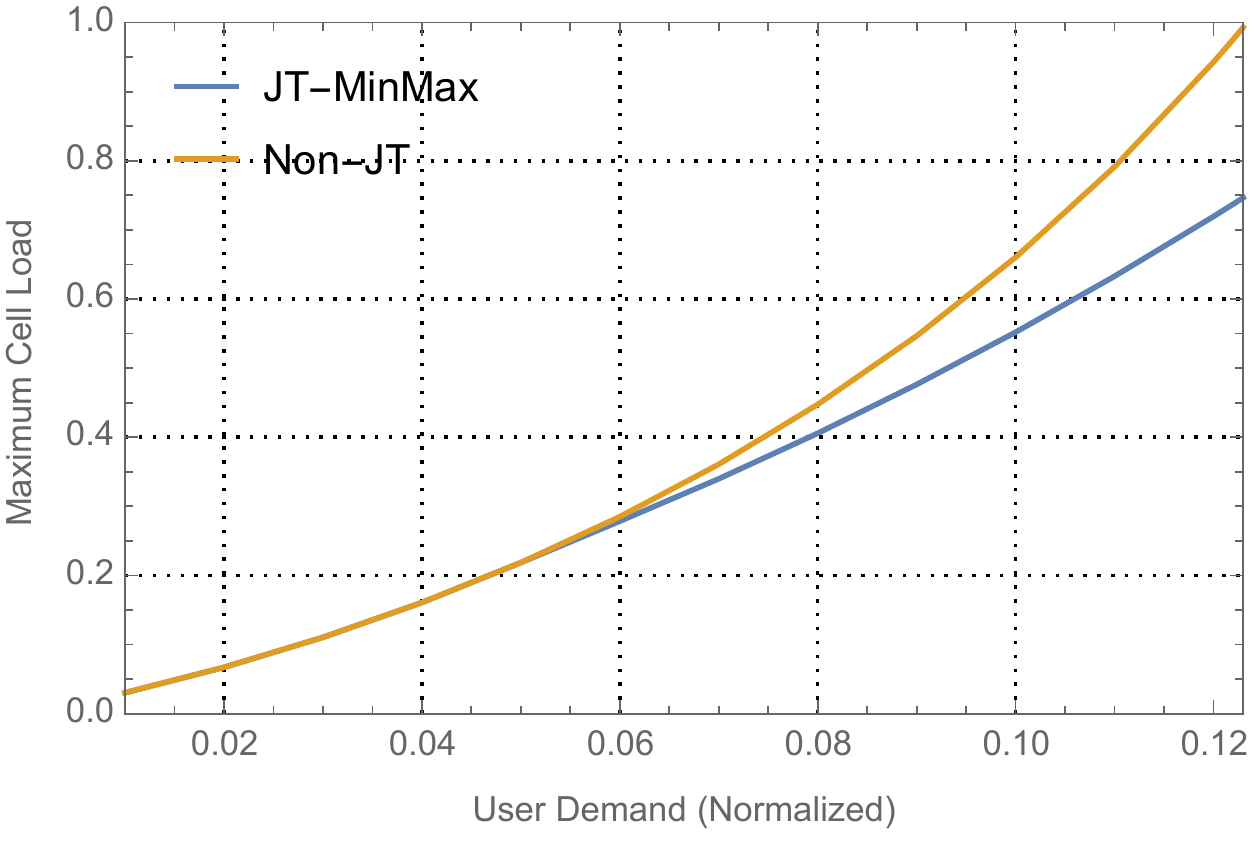}}~~~
\subfigure[Load for JT-MinMax and Non-JT in each cell.]
{\includegraphics[width=0.32\linewidth]{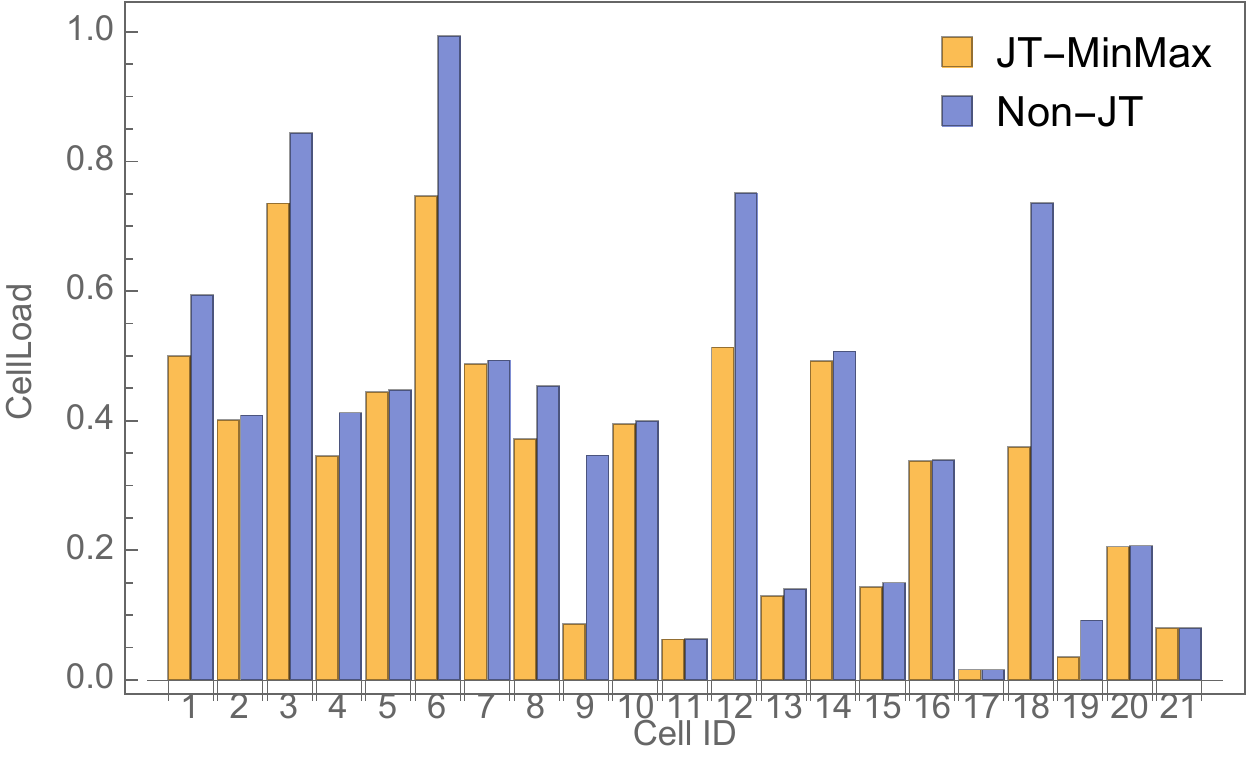}}
\caption{The sub-figure (a) illustrates the HetNet layout, where MC, SC and UE are denoted by $\bigcirc$, $\Box$, and $\bm{\cdot}$ respectively. Sub-figures (b) and (c) shows the numerical results.}
\vskip -15pt
\label{fig:network}
\end{figure*}

The network layout, as deployed in \cite{Anonymous:SLcg4Bln}, is illustrated in \figurename~\ref{fig:network}~(a). There are seven hexagonal cell regions in total, of which the center is deployed with one MC, indexed by numbers 1--7. Two SCs are randomly placed in each hexagon, indexed by numbers 8--21. Thirty UEs are randomly and uniformly distributed in each hexagonal region. The HetNet operates at 2 GHz. Each resource block follows the LTE standard of 180 kHz bandwidth and the bandwidth for each cell is 4.5 MHz. The transmit power per resource block for MCs and SCs are set to 200 mW and 50 mW, respectively. The noise power spectral density is set to -174 dBm/Hz. The path loss follows COST-231-HATA model and the shadowing coefficients are generated by the log-normal distribution with 8 dB standard deviation. The network is initialized to that each UE is connected to the cell (MC or SC) with the best received signal power. In JT-MinMax, $\gamma$ and $\tau$ are set to 5 and 20, respectively. 

In \figurename~\ref{fig:network}~(b), we compare the maximum cell load in the initial non-JT case with JT-MinMax, with respect to user demands. In the initial non-JT case, we let each UE select its serving cell by the best received signal power. By JT-MinMax, the maximum cell load is reduced by 17.82\% on average. For the maximum achievable user demand, the maximum cell load is improved by 24.81\%. We can see from \figurename~\ref{fig:network}~(b) that the improvement is much more significant in higher user demand. When the user demand is very low, there is visually no difference between the two schemes. \figurename~\ref{fig:network}~(c) shows the load of each cell for the maximum achievable user demand. We can see from the result that, by JT-MinMax, the load is reduced for both MCs and SCs. On average, the difference between maximum and minimum cell load is reduced by 25.21\% via JT-MinMax. Specifically, the apparently high load of some cells, e.g., cells $1$, $3$, $6$, $12$ and $18$, are reduced.




\section{Conclusion}
\label{sec:conclusion}
This paper proposed a generalized version of the load coupling model, taking into account JT. The load coupling equations are given for both cells and UEs. Then we show two observations in terms of SIF property of these load coupling equations. Under the proposed model, the load balancing problem is studied for both a special two-cell case and the general scenario in HetNet. For the special two-cell case, we show that the greedy algorithm achieves the global optimality. For the general case, a sufficient condition for reducing the cell load is given, which is utilized as a criterion for adding the JT downlinks in the proposed algorithm JT-MinMax. For the HetNet scenario, JT-MinMax leads to better performance in maximum cell load than Non-JT solution.

\section*{Acknowledgements}
\vskip -4pt
This work has been supported by the Swedish Research Council and the Link\"{o}ping-Lund Excellence Center in Information Technology (ELLIIT), Sweden, and the European Union Marie Curie project MESH-WISE (FP7-PEOPLE-2012-IAPP: 324515). The work of the second author has been partially supported by the China Scholarship Council (CSC). The work of D. Yuan has been carried out within European FP7 Marie Curie IOF project 329313.


\begin{thebibliography}{10}

\bibitem{Andrews:ez}
J~G Andrews, S~Buzzi, Wan Choi, S~V Hanly, A~Lozano, A~C~K Soong, and J~C
  Zhang.
\newblock {What Will 5G Be?}
\newblock {\em Selected Areas in Communications, IEEE Journal on},
  32(6):1065--1082, 2014.

\bibitem{Porcello:2014wq}
V~Jungnickel, K~Manolakis, W~Zirwas, B~Panzner, V~Braun, M~Lossow, M~Sternad,
  R~Apelfro~x0308 jd, and T~Svensson.
\newblock {The role of small cells, coordinated multipoint, and massive MIMO in
  5G}.
\newblock {\em Communications Magazine, IEEE}, 52(5):44--51, 2014.

\bibitem{Zhang:2012ej}
Yong-Ping Zhang, Liang Xia, P~Zhang, Shulan Feng, Jingyuan Sun, and Xiaotao
  Ren.
\newblock {Joint Transmission for LTE-Advanced Systems with Non-Full Buffer
  Traffic}.
\newblock In {\em Vehicular Technology Conference (VTC Spring), 2012 IEEE
  75th}, pages 1--6. IEEE, 2012.

\bibitem{Tolli:2009gg}
A~Tolli, H~Pennanen, and P~Komulainen.
\newblock {On the Value of Coherent and Coordinated Multi-Cell Transmission}.
\newblock In {\em Communications Workshops, 2009. ICC Workshops 2009. IEEE
  International Conference on}, pages 1--5. IEEE, 2009.

\bibitem{Siomina:2009bp}
I~Siomina, A~Furuskar, and G~Fodor.
\newblock {A mathematical framework for statistical QoS and capacity studies in
  OFDM networks}.
\newblock {\em Personal, Indoor and Mobile Radio Communications, 2009 IEEE 20th
  International Symposium on}, pages 2772--2776, 2009.

\bibitem{Fehske:2012iw}
A~J Fehske and G~P Fettweis.
\newblock {Aggregation of variables in load models for interference-coupled
  cellular data networks}.
\newblock In {\em Communications (ICC), 2012 IEEE International Conference on},
  pages 5102--5107. IEEE, 2012.

\bibitem{Ye:2013ha}
Qiaoyang Ye, Beiyu Rong, Yudong Chen, M~Al-Shalash, C~Caramanis, and J~G
  Andrews.
\newblock {User Association for Load Balancing in Heterogeneous Cellular
  Networks}.
\newblock {\em Wireless Communications, IEEE Transactions on},
  12(6):2706--2716, 2013.

\bibitem{Siomina:2012bl}
I~Siomina and Di~Yuan.
\newblock {Load balancing in heterogeneous LTE: Range optimization via cell
  offset and load-coupling characterization}.
\newblock In {\em Communications (ICC), 2012 IEEE International Conference on},
  pages 1357--1361. IEEE, 2012.

\bibitem{Tanbourgi:2014ck}
R~Tanbourgi, S~Singh, J~G Andrews, and F~K Jondral.
\newblock {Analysis of non-coherent joint-transmission cooperation in
  heterogeneous cellular networks}.
\newblock In {\em Communications (ICC), 2014 IEEE International Conference on},
  pages 5160--5165. IEEE, 2014.

\bibitem{Yates:1995eh}
R~D Yates.
\newblock {A framework for uplink power control in cellular radio systems}.
\newblock {\em Selected Areas in Communications, IEEE Journal on},
  13(7):1341--1347, 1995.

\bibitem{Schubert:2014ud}
Martin Schubert and Holger Boche.
\newblock {\em {Inteference Caculus}}.
\newblock A General Framework for Interference Management and Network Utility
  Optimization. Springer, September 2014.

\bibitem{Anonymous:SLcg4Bln}
Lei You, Lei Lei, and Di~Yuan.
\newblock {Range Assignment for Power Optimization in Load-Coupled
  Heterogeneous Networks}.
\newblock {\em IEEE ICCS}, pages 1--5, September 2014.

\end{thebibliography}

\end{document}